\documentclass[aps,pra,reprint,superscriptaddress,amsmath,amssymb,a4paper,twocolumn]{revtex4-2}
%nofootinbib,

\usepackage{graphicx} % Required for inserting images
\usepackage{uniinput}
\usepackage{mathtools}
\usepackage{amsthm}
\usepackage{enumitem}
\usepackage[html]{xcolor}
\definecolor{riverlane_green}{RGB}{0, 111, 98}
\definecolor{riverlane_light_green}{RGB}{0, 150, 143}
\definecolor{riverlane_orange}{RGB}{255, 117, 0}
\definecolor{riverlane_red}{RGB}{220, 68, 5}
\definecolor{riverlane_pink}{RGB}{207, 111, 127}
\usepackage{hyperref}
\hypersetup{
  %colorlinks   = true, %Colours links instead of ugly boxes
  %urlcolor     = riverlane_green, %Colour for external hyperlinks
  %linkcolor    = riverlane_orange, %Colour of internal links
  %citecolor   = riverlane_green  %Colour of citations
}
\usepackage[capitalize]{cleveref} % nameinlink
\usepackage{tcolorbox}

\usepackage[frozencache,cachedir=.]{minted}
\usepackage{attachfile2}

%Bjorn macros begin

\newcommand{\R}{\mathbb{R}}

\newcommand{\norm}[1]{\left\lVert#1\right\rVert}

\newtheorem{proposition}[subsection]{Proposition}
\newtheorem{theorem}{Theorem}

\newtheorem{remark}[subsection]{Remark}
%Bjorn macros end

\newcommand{\fillcolorline}[1]{%
  \textcolor{#1}{\leaders\hrule height 1pt\hfill}\hspace{0pt}%
}

\definecolor{31119180}{RGB}{31,119,180}
\definecolor{25512714}{RGB}{255,127,14}
\definecolor{4416044}{RGB}{44,160,44}
\definecolor{4416044}{RGB}{44,160,44}
\definecolor{2143940}{RGB}{214,39,40}

\begin{document}

\title{Matrix inversion polynomials for the quantum singular value transformation}
\author{Christoph Sünderhauf}
\email{christoph.sunderhauf@riverlane.com}
\affiliation{Riverlane, Cambridge, United Kingdom}
\author{Zalán Németh}
\affiliation{Riverlane, Cambridge, United Kingdom}
\affiliation{University of Cambridge, United Kingdom}
\author{Adnaan Walayat}
\affiliation{Riverlane, Cambridge, United Kingdom}
\author{Andrew Patterson}
\affiliation{Riverlane, Cambridge, United Kingdom}
\author{Bjorn K.~Berntson}
\email{bkberntson@gmail.com}
\affiliation{Riverlane, Cambridge, United Kingdom}
\affiliation{Riverlane Research, Cambridge, Massachusetts, United States}

\date{\today}

\begin{abstract}
Quantum matrix inversion with the quantum singular value transformation (QSVT) requires a polynomial approximation to $1/x$. Several methods from the literature construct polynomials that achieve the known degree complexity $\mathcal{O}(\kappa\log(\kappa/\varepsilon))$ with condition number $\kappa$ and uniform error $\varepsilon$. However, the \emph{optimal} polynomial with lowest degree for fixed error $\varepsilon$ can only be approximated numerically with the resource-intensive Remez method, leading to impractical preprocessing runtimes. Here, we derive an analytic shortcut to the optimal polynomial. Comparisons with other polynomials from the literature, based on Taylor expansion, Chebyshev iteration, and convex optimization, confirm that our result is optimal. Furthermore, for large $\kappa\log(\kappa/\varepsilon)$, our polynomial has the smallest maximum value on $[-1,1]$ of all approaches considered, leading to reduced circuit depth due to the normalization condition of QSVT. With the Python code provided, this paper will also be useful for practitioners in the field.
\end{abstract}

\maketitle

%\onecolumngrid

\tableofcontents

\section{Introduction}

Solving linear systems is a core subroutine for quantum computers. The optimal query complexity $\mathcal{O}(\kappa\log(1/\varepsilon))$ with condition number $\kappa$ and desired uniform error $\varepsilon$ is achieved by adiabatic solvers \cite{costa2021optimalscalingquantumlinear, jennings2025randomizedadiabaticquantumlinear,dalzell2024shortcutoptimalquantumlinear}. The quantum singular value transformation (QSVT) \cite{gilyen2019} remains popular; going beyond the capability of adiabatic solvers by providing access to the inverted matrix as a block encoding, rather than to a solution vector.

Like earlier linear combination of unitaries (LCU) \cite{Childs_2017} based algorithms, QSVT is based on an odd polynomial approximation of $1/x$ within $S(1/\kappa)$, where
\begin{equation}
S(a):=[-1,-a]\cup[a, 1],
\end{equation} which contains the range of singular values of the block encoded matrix (or eigenvalues in the Hermitian case). The number of queries to the block encoding corresponds to the degree $d$ of the polynomial which is applied to the block-encoded matrix. Suitable polynomials are known with complexity \cite{Childs_2017, gribling2}
\begin{equation}
\label{eq: complexity}
    d = \mathcal{O}\left(\kappa\log\frac{\kappa}{\varepsilon}\right).
\end{equation}

To lower the circuit length, it is important to construct a polynomial $p(x)$ with low degree. We consider the case where a uniform error $\varepsilon$,
\begin{equation}
\label{eq:intro error}
    \left\lVert  p(x) - 1/x\right\rVert_{\infty,S(1/\kappa)} \le \varepsilon,
\end{equation}
for the solution is desired. The notation refers to the infinity norm (maximum error) on the domain $S(1/\kappa)$.

Here, in \cref{sec: methods}, we study and compare various approaches to matrix inversion polynomials \cite{Childs_2017,dong2021,Dong_2022, gribling2}. The main result of this article is an analytic formula for the optimal polynomial w.r.t.~\cref{eq:intro error}, which a subset of the authors previously announced at the conference \cite{Berntson_2024}. The result is based on \cite{privalov2007}, which implicitly constructs the optimal polynomial but falls short of providing an explicit solution, error expression, or computation method. Here, we close this gap (\cref{app: optimal polynomial proof}). As the optimal polynomial, it requires lower degree than the other approaches. Moreover, with our new method it is extremely fast to compute and we include Python code (\cref{app: code}). In \cref{sec: maxima} we consider the maximum value of the polynomials in $[-1,1]$, which is important for the efficacy of QSVT which requires normalized polynomials. For large $\kappa\log(\kappa/\varepsilon)$ the optimal polynomial also shines, making it the polynomial of choice for quantum matrix inversion.

\section{Polynomial approximation methods for $1/x$}

\label{sec: methods}

\begin{figure}
    \includegraphics[width=\linewidth]{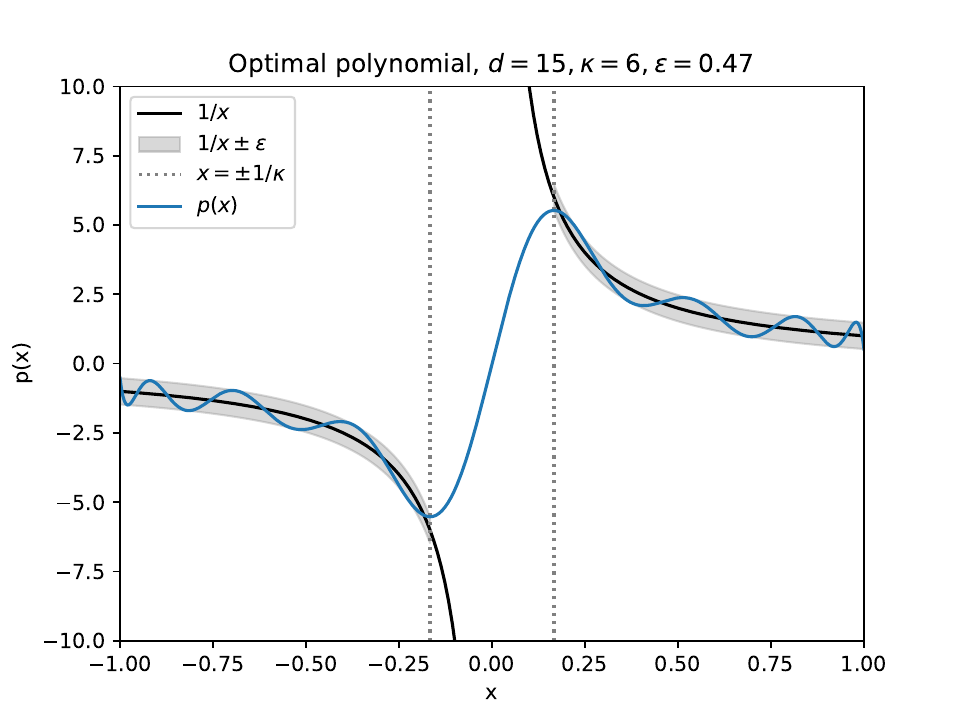}
    \caption{\label{fig:example} Example optimal polynomial \cref{eq: optimal polynomial} from \cref{sec: optimal polynomial}, approximating $1/x$ in $[-1,-1/\kappa]\cup[1/\kappa, 1]$. As the optimal polynomial, it fulfills the equioscillation theorem \cite{trefethen2019} in the shaded target region. }
\end{figure}

\subsection{Optimal polynomial}
\label{sec: optimal polynomial}
The following theorem was announced in our short conference paper \cite{Berntson_2024}, providing an analytical form for the optimal polynomial:
\begin{theorem}
\label{thm:maintext}
    Let $a = 1 / \kappa$, $a\in(0,1)$, and $d=2n-1$ odd. Then we call the polynomial $p(x)$ of degree $d$ minimizing the uniform error
    \begin{equation}
        \label{eq:optimal error def}
        \varepsilon := \left\lVert p(x)-1/x\right\rVert_{\infty,S(a)}
    \end{equation}
    in the domain $S(a) := [-1,-a]\cup[a,1]$ the ``optimal'' polynomial. It is given by
    \begin{equation}
        \label{eq: optimal polynomial}
        P_{2n-1}(x;a):=\begin{cases}\dfrac{1}{x}-\dfrac{L_n\big(\frac{2x^2-(1+a^2)}{1-a^2};a\big)}{xL_n\big(-\frac{1+a^2}{1-a^2};a\big)} & x\in \mathbb{R}\setminus\{0\} \\
        0 & x=0\end{cases}
    \end{equation}
    with
    \begin{equation}
\label{eq: optimal polynomial L}
L_n(x;a)\coloneqq \frac{1}{2^{n-1}}\bigg( T_n(x)+\frac{1-a}{1+a}T_{n-1}(x)\bigg)
\end{equation}
defined by Chebyshev polynomials. Further, the error achieved in \cref{eq:optimal error def} is
\begin{equation}
\label{eq: optimal error}
\varepsilon_{2n-1}(a)=\frac{(1-a)^n}{a(1+a)^{n-1}}.
\end{equation}
\end{theorem}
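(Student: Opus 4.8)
The plan is to characterize the optimal polynomial through the Chebyshev equioscillation (alternation) theorem and then verify that the explicit $P_{2n-1}$ of \cref{eq: optimal polynomial} meets that characterization. First I would reduce the two-interval problem on $S(a)$ to a single-interval problem. Since $1/x$ is odd and $S(a)$ is symmetric about the origin, a symmetrization argument (replacing any best approximant $p^*$ by $-p^*(-x)$ and invoking uniqueness of best approximation from the $2n$-dimensional Haar space of polynomials of degree $\le 2n-1$ on $S(a)$) shows the optimal polynomial is odd. For odd $p$ the error $p(x)-1/x$ is odd, so its modulus is even and $\|p-1/x\|_{\infty,S(a)}=\|p-1/x\|_{\infty,[a,1]}$; on $[a,1]$ the functions $x,x^3,\dots,x^{2n-1}$ form an $n$-dimensional Haar system (no zero at $x=0$ intervenes), so the alternation theorem requires $n+1$ equioscillation points. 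It then suffices to prove that $P_{2n-1}$ is a genuine odd polynomial of degree $2n-1$ whose error equioscillates with amplitude $\varepsilon_{2n-1}(a)$ at $n+1$ points of $[a,1]$. Polynomiality is quick: writing $P_{2n-1}(x)=\big(1+R(x^2)\big)/x$ with $R(x^2):=-L_n(\phi(x);a)/L_n(\phi(0);a)$ and $\phi(x):=\frac{2x^2-(1+a^2)}{1-a^2}$, the normalization gives $R(0)=-1$, so $1+R(x^2)$ is an even polynomial vanishing at $x=0$, hence divisible by $x^2$, and the quotient is odd of degree $2n-1$.

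The heart of the argument — and the step I expect to be the main obstacle — is to show that the weighted error $L_n(\phi(x);a)/x$ has constant modulus on $[a,1]$. The plan is to substitute $\phi(x)=\cos\theta$, which maps $x\in[a,1]$ to $\theta\in[0,\pi]$ with $x^2=\tfrac12\big((1+a^2)+(1-a^2)\cos\theta\big)$, and then use $T_k(\cos\theta)=\cos k\theta$ to write, with $\rho:=\frac{1-a}{1+a}$,
\begin{equation}
L_n(\cos\theta;a)=\frac{1}{2^{n-1}}\,\mathrm{Re}\!\left(e^{\ii(n-1)\theta}\big(e^{\ii\theta}+\rho\big)\right)=\frac{M(\theta)}{2^{n-1}}\cos\!\big((n-1)\theta+\psi(\theta)\big),
\end{equation}
where $M(\theta)=\sqrt{1+2\rho\cos\theta+\rho^2}$ and $\psi(\theta)=\arg\big((\cos\theta+\rho)+\ii\sin\theta\big)$. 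The crucial identity to verify is $M(\theta)=2x/(1+a)$, which follows by squaring and substituting the expression for $x^2$; it makes the amplitude of $L_n(\phi(x);a)/x$ exactly the constant $1/\big(2^{n-2}(1+a)\big)$. Consequently the error equals a constant times $\cos\big((n-1)\theta+\psi(\theta)\big)$ and reaches $\pm$ its maximum precisely when the phase is an integer multiple of $\pi$.

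To count the equioscillations I would check that the phase $\Theta(\theta):=(n-1)\theta+\psi(\theta)$ increases monotonically from $\Theta(0)=0$ to $\Theta(\pi)=n\pi$: as $\theta$ runs over $[0,\pi]$ the point $(\cos\theta+\rho,\sin\theta)$ traces the upper half of a circle from the positive to the negative real axis, so $\psi$ rises monotonically from $0$ to $\pi$ and $\Theta'=(n-1)+\psi'>0$. Hence $\cos\Theta$ attains $\pm1$ with alternating signs at the $n+1$ phases $0,\pi,\dots,n\pi$, i.e.\ at $n+1$ points of $[a,1]$ including both endpoints $x=a$ and $x=1$; mirroring through the origin yields $2n+2\ge 2n+1$ alternation points on $S(a)$, which establishes optimality by the alternation theorem together with its uniqueness clause.

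Finally, the common amplitude is $\varepsilon=\big(2^{n-2}(1+a)\,|L_n(\phi(0);a)|\big)^{-1}$, so it remains to evaluate $L_n$ at $\phi(0)=-\frac{1+a^2}{1-a^2}<-1$. I would use the hyperbolic Chebyshev values $T_k(-\cosh t_0)=(-1)^k\cosh(k t_0)$ with $e^{t_0}=\frac{1+a}{1-a}$ (equivalently $e^{-t_0}=\rho$), which collapses the two-term combination in \cref{eq: optimal polynomial L} to $|L_n(\phi(0);a)|=\frac{a(1+a)^{n-2}}{2^{n-2}(1-a)^n}$. Substituting reproduces \cref{eq: optimal error}, namely $\varepsilon_{2n-1}(a)=\frac{(1-a)^n}{a(1+a)^{n-1}}$, completing the proof.
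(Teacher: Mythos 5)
Your proof is correct, but it takes a genuinely different route from the paper's. The paper (\cref{app: optimal polynomial proof}) \emph{derives} the polynomial: it maps $S(a)$ to $[-1,1]$ via $y=\frac{2x^2-(1+a^2)}{1-a^2}$, recasts the problem as a weighted minimax problem $\min\lVert q_n(y)/\sqrt{w(y)}\rVert_{\infty,[-1,1]}$ with $w(y)=1-y/\alpha_1$, and invokes Bernstein's theorem (\cref{thm:bernstein}, quoted from Privalov) to identify the minimizer, afterwards establishing the Chebyshev form \cref{eq: optimal polynomial L} through a recurrence relation and initial values. You instead \emph{verify} the given formula directly against the Chebyshev equioscillation theorem: the substitution $\phi(x)=\cos\theta$ together with the identity $1+2\rho\cos\theta+\rho^2=4x^2/(1+a)^2$ (which is exactly the mechanism that makes the weight $\sqrt{w}$ in Bernstein's theorem cancel the factor $1/x$) shows the error is a constant times $\cos\big((n-1)\theta+\psi(\theta)\big)$ with monotone phase running from $0$ to $n\pi$, yielding $n+1$ alternation points on $[a,1]$ and, after mirroring, $2n+2\ge(2n-1)+2$ on $S(a)$; your evaluation of $L_n$ at $\phi(0)$ via $T_k(-\cosh t_0)=(-1)^k\cosh(kt_0)$, $\ee^{-t_0}=\rho$, reproduces \cref{cor:Lna} and hence \cref{eq: optimal error}. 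The trade-off: your argument is self-contained modulo the classical alternation theorem and avoids reliance on the Bernstein/Privalov result, but it presupposes the answer, whereas the paper's route explains where $L_n$ comes from. Two details are worth tightening if you write this up: (i) state the version of the alternation theorem you use for the disconnected compact set $S(a)$ (the sufficiency direction goes through because a competitor difference is a polynomial on all of $\R$ with too many sign changes, but this deserves a sentence); and (ii) the quotient of $1+R(x^2)$ by $x^2$ is even of degree $2n-2$ --- it is division by $x$ that gives the odd degree-$(2n-1)$ polynomial.
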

We provide a proof in \cref{app: optimal polynomial proof}, which was absent in \cite{Berntson_2024}.
See \cref{fig:example} for an example plot of the optimal polynomial.
The prefactor $1/2^{n-1}$ in \cref{eq: optimal polynomial L} is natural from the theorem stated in \cite{privalov2007} on which our proof is based. Note there was a minor misprint for $\varepsilon_{2n-1}$ in \cite{Berntson_2024}.

\begin{figure}
%\begin{minipage}[t]{0.45\textwidth}
\includegraphics[width=\linewidth]{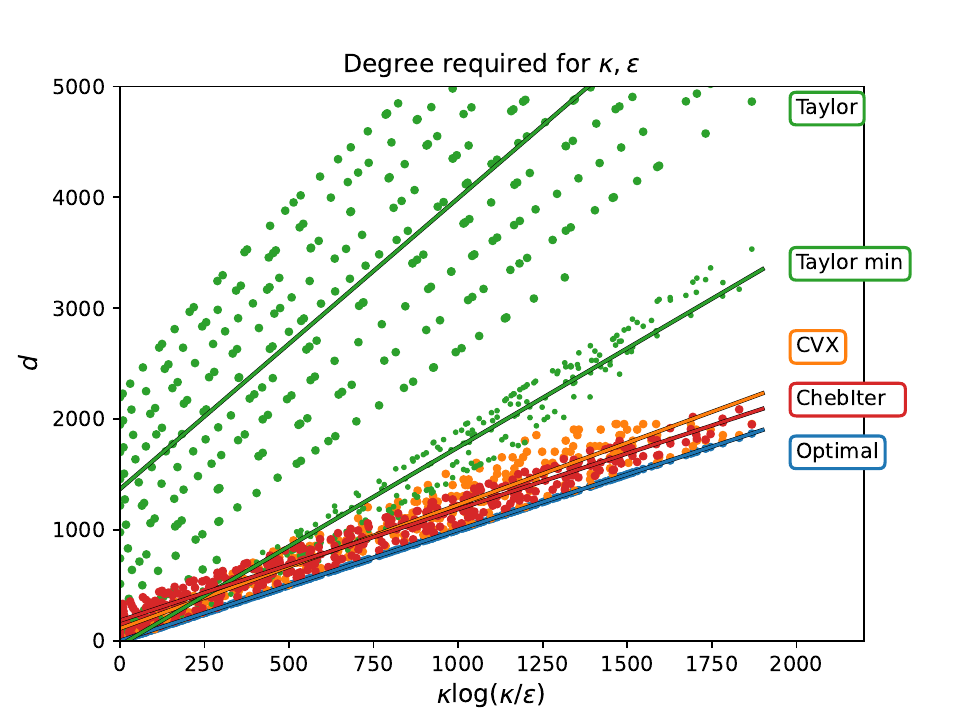}
%\end{minipage}
%\begin{minipage}[t]{0.45\textwidth}
\begin{tabular}{p{2.1cm}ll}
Polynomial & required degree (fit) & Section\\\hline
Optimal \fillcolorline{31119180} & $d\approx 1.00\,κ\log(κ/ε) +1$  & \ref{sec: optimal polynomial}\\
CVX \cite{Dong_2022}\fillcolorline{25512714} & $d\approx 1.12\,κ\log(κ/ε) +110$ & \ref{sec: convex optimisation}\\
Taylor \cite{Childs_2017} \fillcolorline{4416044} & $d\approx 2.62\,κ\log(κ/ε) +1368$  &\ref{sec: taylor}\\
Taylor min \fillcolorline{4416044} & $d\approx 1.78\,κ\log(κ/ε) -40$ & \ref{sec: taylor}\\
ChebIter  \cite{gribling2} \fillcolorline{2143940} & $d\approx 1.01\,κ\log(κ/ε) +182$ & \ref{sec:optx} \\
\end{tabular}
%\end{minipage}
\caption{\label{fig:degree}Polynomial degree required for given $\kappa, \varepsilon$ for various methods. ``Taylor min'' is indicated by small dots.}
\end{figure}

The optimal polynomial described by \cref{thm:maintext} is therefore the odd polynomial minimizing the uniform error
\cref{eq:intro error}.
That is, it has minimum error $\varepsilon$ for given degree $d$, or, equivalently, minimum degree $d$ for given $\varepsilon>0$:
\Cref{eq: optimal error} can be solved for $n$ to find the degree required for given $\varepsilon, a=1/\kappa$ as
\begin{equation}
\label{eq: optimal poly degree}
n= \left\lceil\frac{ \log(1/\varepsilon) +\log(1/a)+ \log(1+a)}{ \log(1 + a) - \log(1 - a) }\right\rceil \sim \frac{1}{2}\kappa\log\frac{\kappa}{\varepsilon}.
\end{equation}
Its asymptotic behavior demonstrates that the degree $d=2n-1$ obeys the complexity \cref{eq: complexity} with constant factor 1. We show this degree required for given $κ,\varepsilon$ in \cref{fig:degree} and numerically verify the asymptotic scaling.

We found numerical evaluation of \cref{eq: optimal polynomial} to be much more stable when using a recurrence relation derived from above formulas. For this we define $\mathcal{L}_n(x; a)$ as
\begin{align}
\mathcal{L}_n(x; a) &:= \frac{L_n(x,a)}{(\alpha(a))^n},\quad
\alpha(a):=\frac{1+a}{2(1-a)},
\end{align}
which avoids the exponential growth/decay $(\alpha(a))^n$ hidden in numerator and denominator of \cref{eq: optimal polynomial}.
The optimal polynomial is
\begin{equation}
    P_{2n-1}(x;a) = \frac{1-(-1)^n\frac{(1+a)^2}{4a} \mathcal{L}_n\left(\frac{2x^2-(1+a^2)}{1-a^2};a\right)}{x},
\end{equation}
with the recurrence relation
\begin{gather}
\mathcal{L}_n(x,a) = \frac{x \mathcal{L}_{n-1}(x, a)}{\alpha(a)} - \frac{\mathcal{L}_{n-2}(x,a)}{4(\alpha(a))^2}
\end{gather}
and initial values
\begin{gather}
\mathcal{L}_1(x,a) = \frac{x + \frac{1-a}{1+a}}{\alpha(a)}, \quad 
\mathcal{L}_2(x,a) = \frac{x^2 + \frac{1-a}{2(1+a)}x - \frac{1}{2}}{(\alpha(a))^2}.
\end{gather}
It can be proven by inserting the recurrence relation of the Chebyshev polynomials into \cref{eq: optimal polynomial L} and using the identity (see \cref{prop:ascheby})
\begin{equation}\label{eq:L denominator}
L_n(-\tfrac{1+a^2}{1-a^2};a)=(-1)^n\frac{4a}{(1+a)^2} (\alpha(a))^n
\end{equation}
for the denominator of \cref{eq: optimal polynomial}. If the Chebyshev coefficients of the polynomial are desired, they can be found by evaluating at Chebyshev nodes with above recurrence and performing Chebyshev interpolation. Since it is known that $P_{2n-1}(x;a)$ is a degree $d=2n-1$ polynomial, Chebyshev interpolation is numerically exact. We provide Python code for evaluating the optimal polynomial in \cref{app: code}. \Cref{fig:runtime} shows its runtime.

\begin{figure}
\includegraphics[width=\linewidth]{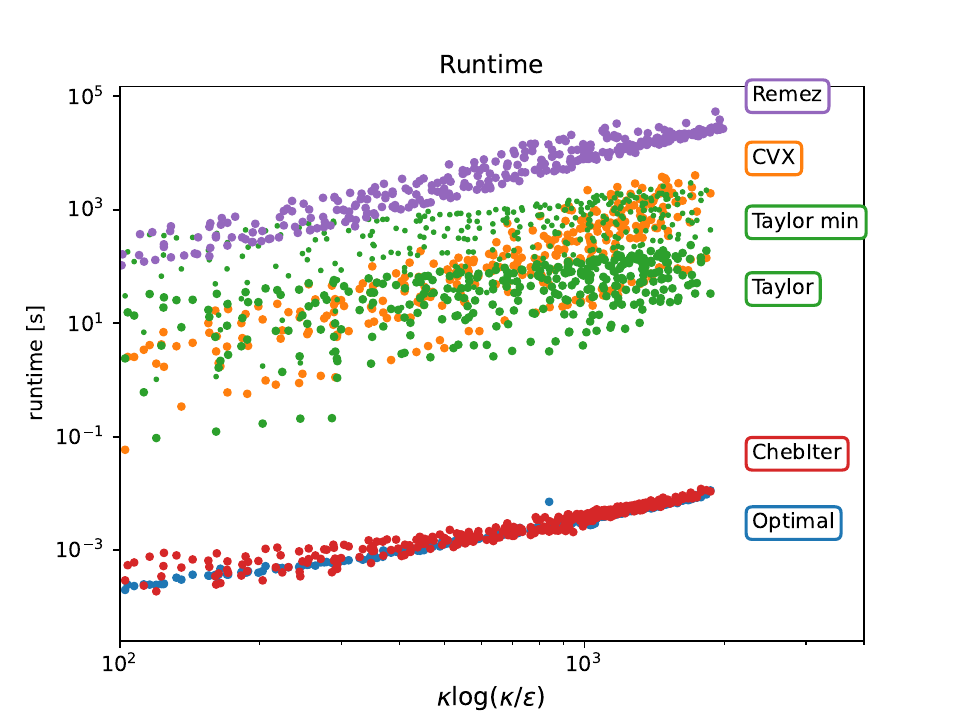}
\caption{\label{fig:runtime}Runtime of generating polynomial coefficients in the Chebyshev basis for given $\kappa,\varepsilon$ for various methods. Note that Remez and CVX actually show runtime given $\kappa,d$; a binary search required to find $d$ if $\varepsilon$ is given would increase the runtime further. ``Taylor min'' is indicated by small dots. All calculations have been performed on a Macbook M2 Pro, apart from Remez, which used an AMD EPYC 7742 processor.}
\end{figure}

\subsubsection*{Remez method}
Given a degree $d$, the optimal polynomial according to uniform error \cref{eq:optimal error def} can also be found numerically by the iterative Remez method \cite{trefethen2019}. This approach had previously been used to construct matrix inversion polynomials for QSVT, see \cite{dong2021}, whose authors had also released a software package QSPPACK \cite{qsppack_remez_notebook} including a Julia implementation of the Remez method. In practice, often the desired error $\varepsilon$ is given, and one must find the optimal polynomial with lowest degree. This required running multiple Remez methods in a binary search. The Remez method is slow and resource-hungry, not least because it requires higher arithmetic accuracy than the standard machine precision. See \cref{fig:runtime} for the runtime of a single Remez run, which can be hours and will become even longer if the target error is given, necessitating binary search. 

Our analytical shortcut in \cref{thm:maintext} provides the same optimal polynomial as the converged Remez limit, but has vastly lower runtime, by many orders of magnitude, see \cref{fig:runtime}.

\subsubsection*{Note on Chebyshev approximation}
\label{sec: cheby approximation}
A standard method of approximating functions is Chebyshev approximation \cite{trefethen2019}, which gives near-optimal uniform approximations throughout the interval $[-1,1]$. For smooth functions, the required degree is expected to be $d\sim \log(1/\varepsilon)$. In contrast, Taylor approximation is accurate near the expansion point rather than in a whole interval.
Chebyshev approximations can be found by using either continuous or discrete orthogonality relations between Chebyshev polynomials. These require integrating or sampling the desired function throughout the region $[-1,1]$. In our case, Chebyshev approximation is not directly applicable: The target function $1/x$ is only defined on $S(a)$. Moreover, for example mapping $[a,1]\mapsto[-1,1]$ would interfere with the condition of using odd Chebyshev polynomials only. 
In any case, the optimal polynomial above will have lower (or equal) error than any other polynomial approximation with the same degree.

\subsection{Convex optimization}
\label{sec: convex optimisation}

Convex optimization (CVX) as a method to approximate a function for QSVT was introduced in \cite[Section IV]{Dong_2022}, with a Matlab implementation for matrix inversion in QSPPACK \cite{qsppack-qlsp-examples}. The idea is to fix a degree $d$ and fix $N$ points $S_N \subset [-1, 1]$ (chosen as roots of Chebyshev polynomials). Then, use CVX to find polynomial coefficients minimizing $\varepsilon_{\text{cvx}} = \max_{x\in S_N\cap [a,1]} |p(x) - 1/x|$.

Our results are obtained through a Python-based implementation using the CVXPY package \cite{diamond2016cvxpy,agrawal2018rewriting}. As $\varepsilon_{\text{cvx}}$ is only a lower bound of the true error, we estimate the actual error with 
\begin{equation}
\label{eq: eps estimate}
\varepsilon = \max_{x\in\texttt{np.linspace}(a, 1, 10^5)} |p(x)-1/x|.
\end{equation}
Note that this can only underestimate the error, i.e.~overestimate the efficacy of this method, leaving our conclusion valid that the optimal polynomial (\cref{sec: optimal polynomial}) is best.
We choose $N$ by starting with $8d$ and doubling it until $\varepsilon_{\text{cvx}} \ge 0.6\varepsilon$.
In order to improve numerical stability as suggested in \cite{qsppack-qlsp-examples}, we find approximations to the scaled down function $(1/x)/(2/a)$ during CVX. Similarly as described above for the Remez method, if a target $\varepsilon$ is given, a polynomial with lowest degree $d$ can be found by binary search. However, we have found even one iteration of CVX with $d$ specified (and increasing $N$ as described above) has a long computation time, as shown in \cref{fig:runtime}.
In \cref{fig:degree} we plot the degree required for $\kappa, \varepsilon$. In fact, we have generated the data by choosing $d$ and $\kappa$, and plotting them with the resulting $\varepsilon$, as described above.

\subsection{Taylor expansion}
\label{sec: taylor}
The most common approximation for the matrix inversion polynomial in quantum algorithms literature \cite{Childs_2017,gilyen2019, martyn2021} stems from the approximating polynomial
\begin{equation}
\label{eq: lit 1/x}
    \frac{1}{x} \approx \frac{1 - (1-x^2)^b}{x}.
\end{equation}
This can be understood \cite{gribling2021} as the Taylor expansion
\begin{equation}
    \frac{1}{x'} \approx \sum_{i=0}^{b-1}(1-x')^i = \frac{1-(1-x')^b}{x'}
\end{equation}of $1/x'$ around $x'=1$ up to order $b-1\in\mathbb{N}$, suitably transformed as $x\cdot \frac{1}{x'},x'\mapsto x^2$ to give the odd polynomial \cref{eq: lit 1/x}.

As shown in \cite{Childs_2017}, its Chebyshev expansion can be truncated further to a polynomial of degree
\begin{align}
    \label{eq degree lit 1/x}
    d = 2D+1,\ D &= \left\lceil\sqrt{b\log\frac{4b}{\varepsilon/2}}\right\rceil = \mathcal{O}\left(\kappa\log\frac{\kappa}{\varepsilon}\right),\\
    b &= \left\lceil\kappa^2\log\frac{\kappa}{\varepsilon/2}\right\rceil
\end{align}
which is proven \footnote{Note that in the literature, $b$ and $D$ are described such that both approximations \cref{eq degree lit 1/x} and its truncation introduce an error $\varepsilon$, so we have halved it in \cref{eq degree lit 1/x} to keep the overall error at $\varepsilon$ in line with our convention.} to be an $\varepsilon$ approximation to $1/x$ in $[1/\kappa,1]$. The Chebyshev expansion of that degree $d$ polynomial is explicitly known:
\begin{equation}
    p(x) = 4 \sum_{j=0}^{D} (-1)^j \left[ 2^{-2b} \sum_{i=j+1}^{b} \binom{2b}{b+i} \right] T_{2j+1}(x).
\end{equation}
However, the sum in the coefficients in this formula is numerically unstable; we instead generate the polynomial directly from \cref{eq: lit 1/x} like the package pyQSP \cite{pyqsp2024}.

The runtime of generating the polynomial with the above procedure, as well as the required degree per \cref{eq degree lit 1/x} are shown in \cref{fig:runtime} and \cref{fig:degree} respectively under the label ``Taylor''. The data appears as multiple lines in the plot, one for each $\kappa$, as we are using different ranges of $\varepsilon$ for each $\kappa$ to span the same range $\kappa\log(\kappa/\varepsilon)$. 

The truncation of the polynomial according to the analytical result $D$ from \cref{eq degree lit 1/x} is too pessimistic. Further truncation for a given error $\varepsilon$ is possible. We minimize the truncation degree by binary search, estimating the error achieved by the polynomial with \cref{eq: eps estimate}. The results are indicated by ``Taylor min'' in the plots. While the binary search for best truncation does considerably lower the degree (\cref{fig:degree}), it does lead to longer runtimes (\cref{fig:runtime}).

\subsection{Chebyshev iteration}
\label{sec:optx}

In \cite{gribling2},\footnote{See arxiv:2109.04248 for an older arxiv version of \cite{gribling2}.} a different error bound than \cref{eq:intro error} was used. They considered instead the bound
\begin{equation}
    \lVert p(x)\cdot x - 1\rVert_{\infty, S(1/\kappa)} \le\varepsilon',
\end{equation}
and describe the optimal polynomial minimizing $\varepsilon'$ with respect to this bound.
It arises from Chebyshev iteration and is
\begin{equation}
\label{eq: chebyit poly}
    p(x) = \dfrac{1}{x}-\dfrac{T_n\big(\frac{2x^2-(1+a^2)}{1-a^2};a\big)}{xT_n\big(-\frac{1+a^2}{1-a^2};a\big)}.
\end{equation}
While similar to \cref{eq: optimal polynomial} it directly uses Chebyshev polynomials instead of $L_n(x)$.

The polynomial can also be used as an approximant w.r.t.~the error $\varepsilon$ definition \eqref{eq:intro error} used elsewhere, even if it loses the optimality property: As shown in \cite{gribling2021}, the polynomial achieves error $\varepsilon$ for
\begin{equation}
\label{eq: chebyit degree}
    d = 2n-1,\ n = \left\lceil \frac{1}{2}\kappa\log\frac{2\kappa}{\varepsilon}\right\rceil.
\end{equation}

We evaluate the polynomial by evaluating \cref{eq: chebyit poly} at $d+1$ points with $\texttt{scipy.special.eval\_chebyt()}$ and performing Chebyshev interpolation. The degree from \cref{eq: chebyit degree} and the runtime are shown in \cref{fig:degree} and \cref{fig:runtime} respectively labeled ``ChebIter''.

Unlike for the Taylor polynomial, it appears the analytical result \cref{eq: chebyit degree} is quite tight and cannot be improved much by minimizing $n$ for given error.

Comparing to the optimal polynomial, we find that the required degree of the ``ChebIter'' approach is higher (\cref{fig:degree}), albeit the asymptotic behavior of \cref{eq: optimal poly degree} and \cref{eq: chebyit degree} are the same. Runtime (\cref{fig:runtime}) is almost identical.

\section{Bounding polynomials}
\label{sec: maxima}

\begin{figure}
    \includegraphics[width=\linewidth]{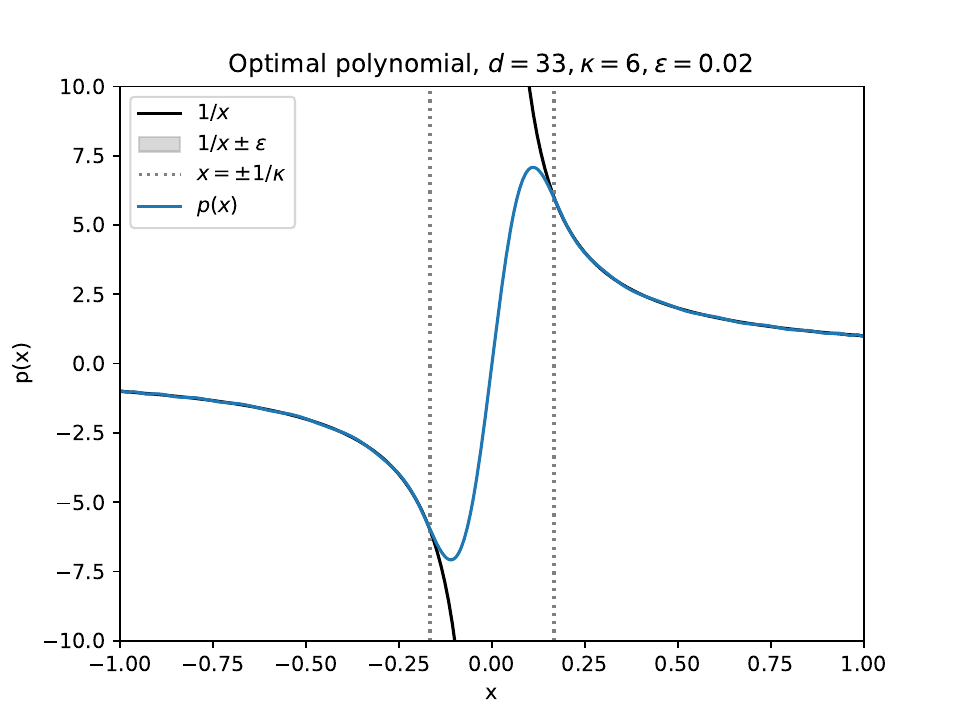}
    \caption{\label{fig:examplehigh} Example optimal polynomial \cref{eq: optimal polynomial}. Its maximum value $M>\kappa$ is attained outside the target region $[-1,-1/\kappa]\cup[1/\kappa,1]$.}
\end{figure}

The maximum
\begin{equation}
\label{eq: maximum}
M = \lVert p(x) \rVert_{\infty,[-1,1]}
\end{equation}
of the polynomials is important because QSVT requires $M\le1$.

First of all, $p(x)$ will exceed $1$ inside the region of interest $S(1/\kappa)$, with a maximal value $\lVert p(x)\rVert_{\infty,S(1/\kappa)} \le \kappa + \varepsilon$. This is fundamental and is dealt with by scaling down $p(x)\to p(x)/(\kappa+\varepsilon)$. However, in the region $[0,1/\kappa]$ the polynomial will generally attain $M>\kappa+\varepsilon$. See \cref{fig:examplehigh} for an example. Several strategies can be used to mitigate this. They result in a higher cost for larger $M$. The first strategy is to scale down the polynomial further:
\begin{equation}
    p(x) \to p(x)/M.
\end{equation}
Then, the result of QSVT can be amplified by an amplification factor $M/(\kappa+\varepsilon)$. Uniform singular value amplification \cite{gilyen2019} requires multiple queries to the result of QSVT, effectively leading to a multiplicative factor to the overall degree.
The second strategy is to multiply the polynomial with a window function that is close to 0 in $[0,1/\kappa]$, and close to 1 in $[1/\kappa, 1]$ (with a crossover region), as in \cite{Childs_2017,gilyen2019, martyn2021}. Multiplication of polynomials leads to an additive contribution to the overall degree, making this method preferable.

\begin{figure}
\includegraphics[width=\linewidth]{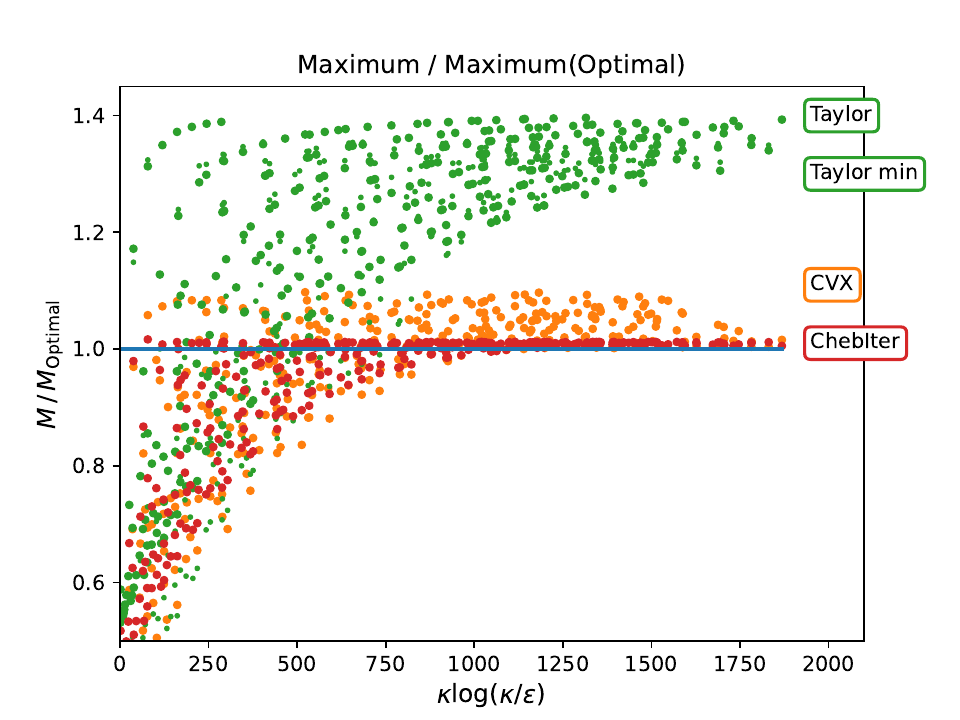}
\caption{\label{fig:maximum}Maximum value in $[-1,1]$ of generated polynomials from various methods, compared to the maximum value of the optimal polynomial. A lower maximum means a cheaper window function can be used to ensure boundedness by $1$, see \cref{sec: maxima} for details. ``Taylor min'' is indicated by small dots.}
\end{figure}

We estimate the maximum of polynomials numerically by sampling $p(x)$ at $N>d$ equidistant points $D_N=\{(2i/N) - 1\ |\  i=0,\cdots,N-1\} \subset [-1,1]$. According to \cite[Eq.~(10)]{pfister2018boundingmultivariatetrigonometricpolynomials}, the bound
\begin{equation}
|p(x)| \le \left(\cos\left(\frac{\pi}{2}\frac{d}{N}\right)\right)^{-1} \max_{x\in D_N}|p(x)|
\end{equation}
holds. Throughout we will choose $N=25d$, which ensures a tight bound 
\begin{equation}
    \max_{x\in D_N}|p(x)|\le M\le 1.002\max_{x\in D_{N=25d}}|p(x)|.
\end{equation} The advantages of this method include its feasibility (compared to analytic computation of maxima of high degree polynomials which is generally impossible), and that it cannot underestimate $M$ by an unknown and possibly large amount (compared to gradient descent which could get stuck in a local maximum). 
The numerically determined maxima are also significantly below the theoretical analytical bounds for the ``Taylor'' \cite{Childs_2017} and ``ChebIter'' \cite{gribling2021} polynomials.
\Cref{fig:maximum} shows the estimated $M\approx1.001\max_{x\in D_N}|p(x)|$, which is provably accurate up to ca.~$0.1\%$. The optimal polynomial has smallest maximum in the large $\kappa\log(\kappa/\varepsilon)$ regime. However, in the small $\kappa\log(\kappa/\varepsilon)$ regime, it could be favorable to use another polynomial depending on the degree tradeoff between $1/x$ polynomial and window function polynomial.

\section{Conclusion}
This work focused on the optimal polynomial (w.r.t.~uniform error) for matrix inversion, which reduces the degree and circuit length required in QSVT.

A comparison shows that the optimal polynomial has the lowest degree compared to other methods, commensurate with \cite{gribling2} where ``ChebIter'' was compared to the optimal polynomial for smaller parameters.
Previously this optimal polynomial could only be computed with the Remez method, leading to prohibitively large runtimes. Our exact formula and code provided are as fast as the fastest alternative approximation methods from the literature, making the optimal polynomial a viable alternative.

Furthermore, we have compared the maximum value attained by the various polynomial approximations, as it is an important factor to consider, impacting QPU runtime. We find that, for large $\kappa$ and/or small $\varepsilon$, the optimal polynomial also has the smallest maximum and is thus the polynomial of choice of the methods considered here, even when taking into account multiplication with a window  function.

Polynomials with higher degree than the optimal polynomial, but smaller maximum value exist. Depending on the exact cost contribution of the window function, such a polynomial may be favorable overall. We believe that an analytic treatment taking this tradeoff into account is not viable.

Beyond reducing the degree and maximum of the polynomial used, other factors affect the circuit length of QSVT. Notably, this includes data loading/readout \cite{Aaronson:2015scy}, like initial state preparation \cite{Low_2024, Gui_2024,O_apos_Brien_2025}, block encoding circuits \cite{Chakraborty_Block_Encoding, S_nderhauf_2024,Nguyen_2022,lapworth2025preconditionedblockencodingsquantum}, and measurement \cite{Patterson_2025}. Further, implementation details like decomposition of rotations into a universal gate set \cite{Akahoshi_2024} and QEC are also relevant to the overall performance.

Finally we would like to remark that we envisage the present polynomial to be implemented by QSVT rather than LCU, as suggested in \cite{gribling2}. The LCU would result in a (slightly) worse subnormalization, and require more ancilla qubits. Further, the only advantage of LCU was that it does not require computation of phase factors. After a recent surge of research activity \cite{Berntson_2025, alexis2024infinitequantumsignalprocessing, ni2024fastphasefactorfinding, laneve2025generalizedquantumsignalprocessing, ni2025inversenonlinearfastfourier}, computing phase factors is no longer a bottleneck, see e.g.~the software package \texttt{nlft-qsp}~\cite{laneve2025nlftqsp}.

\begin{acknowledgments}
    We thank Earl Campbell for useful discussions. This work was partially funded by Innovate UK (grant reference 10071684).
\end{acknowledgments}

\bibliography{MIP}

\appendix

\onecolumngrid

\section{Optimal approximation of the function $1/x$}

\label{app: optimal polynomial proof}

In this appendix we prove the analytic result  \cref{thm:maintext} for the optimal polynomial in \cref{sec: optimal polynomial}, based on \cite{privalov2007}. While the reference purports to find the optimal polynomial, it falls short of providing any explicit or readily computable formula, or expression for the error. To close this gap we begin with the following theorem stated in \cite{privalov2007} and attributed to S.N.~Bernstein. The notation 
\begin{equation}
    ||f(x)||_{\infty,A} := \sup_{x\in A} |f(x)|
\end{equation}
refers to the infinity norm on the set $A\subset\mathbb{R}$.
\begin{theorem}\label{thm:bernstein}
Let $N,n\in \mathbb{N}$, $N$ even, $\alpha_j\in \R\setminus [-1,1]$ $(j=1,\ldots,N-1)$, $\alpha_N\in (\R\setminus [-1,1])\cup \{\infty\}$,
\begin{equation}\label{eq:bernstein w}
w(x)\coloneqq \prod_{j=1}^{N}	\bigg(1-\frac{x}{\alpha_j}\bigg) \quad (x\in [-1,1]), 
\end{equation}
and
\begin{equation}
v_{\pm}(x)\coloneqq x\pm \sqrt{x^2-1}	\quad (x\in [-1,1]). 
\end{equation}
Moreover, supposing that
\begin{equation}\label{eq:alphatobeta}
\alpha_j=\frac12\bigg(\beta_j+\frac{1}{\beta_j}\bigg), \quad |\beta	_j|\leq 1 \quad (j=1,\ldots,N)
\end{equation}
holds, define
\begin{equation}
\Omega_{\pm}(x)\coloneqq \prod_{j=1}^{N} \sqrt{v_{\pm}(x)-\beta_j} ,	
\end{equation}
\begin{equation}\label{eq:Lm}
R_n\coloneqq \frac{1}{2^{n-1}}\prod_{j=1}^{N} \sqrt{1+\beta_j^2},
\end{equation}
and
\begin{equation}\label{eq:bernstein L}
L_{n}(x)\coloneqq \frac{R_n}{2}\bigg( 	v_+(x)^{n-N} \frac{\Omega_+(x)}{\Omega_-(x)}+v_-(x)^{n-N} \frac{\Omega_-(x)}{\Omega_+(x)}\Bigg)\sqrt{w(x)} \quad (x\in [-1,1]).
\end{equation}

Then,
\begin{equation}
\min_{\{c_j\}_{j=0}^{n-1}\subset\R} \norm{ \frac{\sum_{j=0}^{n-1} c_j x^j+x^n}{\sqrt{w(x)}} }_{\infty,[-1,1]}=R_n 
\end{equation}
and the minimizing polynomial is $L_n(x)$. 
\end{theorem}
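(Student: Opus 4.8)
The plan is to strip the ornate expression for $L_n$ down to a transparent monic polynomial, show that the weighted function $L_n/\sqrt{w}$ equioscillates on $[-1,1]$, and then invoke the classical alternation argument. The first step is algebraic simplification. Using $v_+(x)v_-(x)=1$ and $v_+(x)+v_-(x)=2x$, the substitution $\alpha_j=\tfrac12(\beta_j+1/\beta_j)$ gives the factorization $1-x/\alpha_j=(v_+-\beta_j)(v_--\beta_j)/(1+\beta_j^2)$, so that $w=\Omega_+^2\Omega_-^2/\prod_j(1+\beta_j^2)$ and hence $\sqrt{w}=\Omega_+\Omega_-/\prod_j\sqrt{1+\beta_j^2}$ with branches fixed consistently. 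Feeding this into the definition collapses the cross terms $\tfrac{\Omega_\pm}{\Omega_\mp}\sqrt{w}$ into $\Omega_\pm^2/\prod_j\sqrt{1+\beta_j^2}$, and combining with $R_n$ cancels the weight entirely, leaving
\[
L_n(x)=\frac{1}{2^n}\Bigl(\tilde Q(v_+)+\tilde Q(v_-)\Bigr),\qquad \tilde Q(v):=v^{\,n-N}\prod_{j=1}^{N}(v-\beta_j),
\]
where $\tilde Q$ is monic of degree $n$. Expanding $\tilde Q=\sum_k a_k v^k$ and using $v_+^k+v_-^k=2T_k(x)$ then exhibits $L_n=2^{1-n}\sum_k a_kT_k(x)$ as a genuine monic polynomial of degree $n$, so it is an admissible competitor in the minimization.

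Next I would evaluate $L_n/\sqrt{w}$ on $[-1,1]$. Writing $x=\cos\theta$ so that $v_\pm=e^{\pm i\theta}$ and, since the $\beta_j$ are real, $\Omega_-^2=\overline{\Omega_+^2}$, the two summands are complex conjugates and $L_n=2^{1-n}\,\re\bigl(v_+^{\,n-N}\Omega_+^2\bigr)$. Because $|v_+^{\,n-N}\Omega_+^2|=\prod_j|e^{i\theta}-\beta_j|$ equals $\sqrt{w}\cdot\prod_j\sqrt{1+\beta_j^2}$, the weight cancels once more and one obtains the clean identity $L_n(x)/\sqrt{w(x)}=R_n\cos\phi(\theta)$, where $\phi(\theta)=(n-N)\theta+\sum_j\arg(e^{i\theta}-\beta_j)$ is the argument of $v_+^{\,n-N}\Omega_+^2$. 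This immediately yields the upper bound $\norm{L_n/\sqrt{w}}_{\infty,[-1,1]}\le R_n$.

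The crux is the third step: showing that $\phi$ increases strictly from $\phi(0)=0$ to $\phi(\pi)=n\pi$. Since each $\beta_j$ is real with $|\beta_j|<1$, a short computation gives $\tfrac{d}{d\theta}\arg(e^{i\theta}-\beta_j)=\tfrac{1-\beta_j\cos\theta}{|e^{i\theta}-\beta_j|^2}\ge 1-|\beta_j|>0$, so $e^{i\theta}$ winds around each interior point $\beta_j$, contributing a phase increase of $\pi$ over $\theta\in[0,\pi]$; together with the monotone term $(n-N)\theta$ the phase sweeps through exactly $n\pi$. Consequently $\cos\phi$ attains $\pm 1$ at the $n+1$ abscissae where $\phi\in\{0,\pi,\dots,n\pi\}$, with alternating signs, so $L_n/\sqrt{w}$ equioscillates between $+R_n$ and $-R_n$ at $n+1$ points of $[-1,1]$.

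Finally I would close with the standard alternation argument. Since $1-x/\alpha_j>0$ for every $x\in[-1,1]$ (as $\alpha_j\notin[-1,1]$), the weight $w$ is positive and $\sqrt{w}$ is real and positive throughout. If some competing monic degree-$n$ polynomial $\tilde P$ satisfied $\norm{\tilde P/\sqrt{w}}_{\infty,[-1,1]}<R_n$, then at the $n+1$ equioscillation points the difference $L_n-\tilde P$ would inherit the alternating signs of $L_n$, forcing at least $n$ sign changes and hence at least $n$ zeros; but $L_n-\tilde P$ has degree at most $n-1$, a contradiction. Thus $L_n$ is the unique minimizer and the minimal value is $R_n$. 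I expect the main obstacle to be precisely the winding/monotonicity analysis of $\phi$ in the third step, because the count of exactly $n+1$ alternation points—on which the entire optimality argument hinges—is where the hypothesis $|\beta_j|<1$ must be used decisively.
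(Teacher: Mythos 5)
The paper itself offers no proof of \cref{thm:bernstein}: the statement is imported verbatim from Privalov's paper and attributed to Bernstein, and the only accompanying material is a remark reconciling \cref{eq:bernstein L} with the form printed in the reference via $1/v_{\pm}=v_{\mp}$. Your proposal therefore supplies a self-contained argument where the paper merely cites, and it is the natural classical one: the factorization $1-x/\alpha_j=(v_+(x)-\beta_j)(v_-(x)-\beta_j)/(1+\beta_j^2)$ collapses $L_n$ to $2^{-n}\bigl(\tilde Q(v_+)+\tilde Q(v_-)\bigr)$ with $\tilde Q(v)=v^{n-N}\prod_j(v-\beta_j)$ monic, the substitution $x=\cos\theta$ gives $L_n/\sqrt{w}=R_n\cos\phi(\theta)$ with $\phi(\theta)=(n-N)\theta+\sum_j\arg(e^{i\theta}-\beta_j)$, the phase sweep from $\phi(0)=0$ to $\phi(\pi)=n\pi$ produces $n+1$ equioscillation points, and the Chebyshev alternation argument finishes. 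I checked the algebra and it is sound; in particular your derivative formula for $\arg(e^{i\theta}-\beta_j)$ and the endpoint values of $\phi$ are correct, and the positivity needed there follows from $|\beta_j|<1$, which is exactly what $\alpha_j\notin[-1,1]$ guarantees.

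Three points need tightening before this is airtight. First, the branch bookkeeping: $\frac{\Omega_{\pm}}{\Omega_{\mp}}\sqrt{w}=\frac{\Omega_{\pm}^2}{\prod_j\sqrt{1+\beta_j^2}}$ holds only up to sign unless you fix the branches so that $\Omega_+\Omega_-=\sqrt{w}\,\prod_j\sqrt{1+\beta_j^2}$; you flag this but should state the convention, since the wrong sign changes which polynomial $L_n$ denotes. Second, the claims that $\tilde Q$ is a monic degree-$n$ polynomial and that $\phi'>0$ both silently assume $n$ is not too small relative to $N$: for $n<N$, $\tilde Q$ is a Laurent polynomial, the identity reads $L_n=2^{1-n}\sum_k a_kT_{|k|}(x)$ and the leading coefficient could receive a contribution from $k=n-N$, while your lower bound $\phi'\ge(n-N)+\sum_j(1-|\beta_j|)$ need not be positive. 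The clean fix is the elementary inequality $\frac{1-\beta_j\cos\theta}{|e^{i\theta}-\beta_j|^2}>\tfrac12$ for $|\beta_j|<1$, which gives $\phi'>n-N/2$; and in any case monotonicity is a luxury, since continuity together with $\phi(0)=0$ and $\phi(\pi)=n\pi$ already yields an ordered set of $n+1$ points realizing $\phi=k\pi$ with alternating signs of $\cos\phi$. Third, your strict-inequality alternation argument shows that no monic competitor beats $R_n$, which combined with your upper bound gives the stated minimum; uniqueness of the minimizer (implicit in ``the minimizing polynomial'') requires the standard refinement with non-strict inequalities and zero counting with multiplicity. None of these is a structural gap, and in the paper's application ($N=2$, $\beta_2=0$, $n\ge1$) all of the edge cases are vacuous.
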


\begin{remark}
\Cref{eq:bernstein L} is written in \cite{privalov2007} as (after correcting misprints and adapting notation)
\begin{equation}\label{eq:Pmalt}
L_{n}(x)\coloneqq \frac{R_n}{2}\bigg( 	v_+(x)^{n-N} \frac{\Omega_+(x)}{\Omega_-(x)}+ v_+(x)^{N-n}\frac{\Omega_-(x)}{\Omega_+(x)}\Bigg)\sqrt{w(x)} \quad (x\in [-1,1]);
\end{equation}
the equivalence with \cref{eq:bernstein L} is seen using $1/v_{\pm}(x)=v_{\mp}(x)$. One disadvantage of this form is that  it is not manifestly symmetric under $\pm\to \mp$. 
\end{remark}

We are now prepared to prove \cref{thm:maintext} from \cref{sec: optimal polynomial} through a sequence of propositions.
For this we specialize to the case $N=2$ with
\begin{equation}
    \alpha_1 = \frac{a^2+1}{a^2-1},\quad\alpha_2 = \infty,
\end{equation}
resulting in
\begin{equation}
    \beta_1 = \frac{a-1}{a+1},\quad \beta_2 = 0.
\end{equation}
This allows us to reduce the optimal polynomial for $1/x$ in $S(a)$ to the situation described in \cref{thm:bernstein}:

\begin{proposition}
    \label{thm:main}
    Let $n\in\mathbb{N}$ and $a\in(0,1)$. Then
    \begin{equation}
        \label{eq: app poly result}
        P_{2n-1}(x;a) = \frac{1}{x} - \frac{L_n\left(\frac{2x^2-(1+a^2)}{1-a^2};a\right)}{xL_n\left(-\frac{1+a^2}{1-a^2};a\right)}
    \end{equation}
    is the odd degree $2n-1$ polynomial minimizing
    \begin{equation}
        \varepsilon_{2n-1}(a) = \norm{\frac{1}{x} - P_{2n-1}(x)}_{\infty,S(a)}.
    \end{equation}
    Here, $L_n(x;a)$ is the polynomial from \cref{eq:bernstein L} with parameters specialized as above, and the minimum error is
    \begin{equation}
        \varepsilon_{2n-1}(a) = \frac{(1-a)^n}{a(1+a)^{n-1}}
    \end{equation}
\end{proposition}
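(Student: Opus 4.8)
The plan is to reduce the parity‑constrained approximation of $1/x$ on the symmetric set $S(a)$ to the weighted extremal problem solved by \cref{thm:bernstein} on $[-1,1]$, and then to translate the normalization. First I would exploit parity: since $1/x$ is odd and $S(a)$ is symmetric, every odd polynomial $p$ of degree $\le 2n-1$ has equal error on $[-1,-a]$ and $[a,1]$, so it suffices to work on $[a,1]$. Writing $p(x)=x\,q(x^2)$ with $\deg q\le n-1$ and substituting $y=x^2\in[a^2,1]$ turns the error into
\begin{equation}
\norm{1/x - p}_{\infty,S(a)} = \max_{y\in[a^2,1]} \frac{|P(y)|}{\sqrt{y}}, \qquad P(y) := 1 - y\,q(y),
\end{equation}
where $P$ ranges over \emph{all} polynomials of degree $\le n$ normalized by $P(0)=1$. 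A final affine change of variable $t=(2y-(1+a^2))/(1-a^2)$ maps $[a^2,1]$ onto $[-1,1]$ and sends $y=0$ to $t_0:=-(1+a^2)/(1-a^2)<-1$; under it $y$ is proportional to $t-t_0$, so the weight $\sqrt{y}$ becomes $\sqrt{t-t_0}$ up to a constant.

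Next I would match this to \cref{thm:bernstein}. With the stated specialization $N=2$, $\alpha_1=(a^2+1)/(a^2-1)$, $\alpha_2=\infty$, the weight \cref{eq:bernstein w} is $w(t)=1+t\,(1-a^2)/(1+a^2)$, which is exactly proportional to $t-t_0$; hence $\sqrt{w(t)}$ reproduces the weight arising from $\sqrt{t-t_0}$ (and $w>0$ on $[-1,1]$ since $t_0<-1$). Thus the problem becomes: minimize $\norm{\tilde P/\sqrt{w}}_{\infty,[-1,1]}$ over polynomials $\tilde P$ of degree $\le n$ subject to the single linear constraint $\tilde P(t_0)=1$, where $\tilde P(t):=P(y(t))$.

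The main obstacle is that \cref{thm:bernstein} fixes a \emph{different} normalization: it minimizes over monic polynomials (leading coefficient $1$), whereas our constraint fixes the \emph{value} at the exterior point $t_0$. I would bridge the two using the classical fact that the weighted Chebyshev extremizer $L_n$ is simultaneously the polynomial of fastest growth outside $[-1,1]$: from the equioscillation of $L_n/\sqrt{w}$ at $n+1$ points in $[-1,1]$, a standard alternation (sign‑counting) argument yields $|\tilde P(t_0)|\le |L_n(t_0)|\,\norm{\tilde P/\sqrt{w}}_{\infty}/R_n$ for every $\tilde P$ of degree $\le n$. Combined with $\tilde P(t_0)=1$ this gives the lower bound $\norm{\tilde P/\sqrt{w}}_{\infty}\ge R_n/|L_n(t_0)|$, with equality for $\tilde P^\star=L_n/L_n(t_0)$; undoing the substitutions identifies this minimizer with \cref{eq: app poly result}.

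Finally I would evaluate the constants. The value $R_n$ follows directly from \cref{eq:Lm} with $\beta_1=(a-1)/(a+1)$, $\beta_2=0$, giving $R_n=\sqrt{2(1+a^2)}/(2^{n-1}(1+a))$, while $L_n(t_0)$ is supplied by \cref{eq:L denominator} (\cref{prop:ascheby}). Tracking the proportionality constants accumulated in the two changes of variable (which combine to a prefactor $\sqrt{2/(1+a^2)}$), the minimal error collapses to
\begin{equation}
\varepsilon_{2n-1}(a) = \sqrt{\tfrac{2}{1+a^2}}\,\frac{R_n}{|L_n(t_0)|} = \frac{(1-a)^n}{a(1+a)^{n-1}},
\end{equation}
as claimed. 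I expect the one genuinely delicate point to be the normalization bridge of the third paragraph; the reductions and the constant‑chasing are routine once the weight identification $w\propto t-t_0$ is in place.
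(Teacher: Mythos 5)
Your reduction is essentially the paper's: the substitution $y=x^2$ followed by the affine map of $[a^2,1]$ onto $[-1,1]$, the identification of the weight $\sqrt{y}$ with a constant multiple of $\sqrt{w}$ under the $N=2$ specialization $\alpha_1=(a^2+1)/(a^2-1)$, $\alpha_2=\infty$, and an appeal to \cref{thm:bernstein}; your constants ($R_n=\sqrt{2(1+a^2)}/(2^{n-1}(1+a))$, the prefactor $\sqrt{2/(1+a^2)}$, and the final value of $\varepsilon_{2n-1}(a)$) all check out against the paper's. The one place you genuinely go beyond the paper is the normalization bridge, and you are right to flag it as the delicate point. The paper's proof takes the minimizer of the \emph{monic} problem, rescales it so that $q_n(t_0)=1$, and reads off $\bar q=1/L_n(t_0)$; it does not argue that among all degree-$\le n$ polynomials satisfying the interpolation constraint $\tilde P(t_0)=1$ (whose leading coefficients are not fixed) the rescaled $L_n/L_n(t_0)$ remains optimal. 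Your lemma $|\tilde P(t_0)|\le |L_n(t_0)|\,\norm{\tilde P/\sqrt{w}}_{\infty,[-1,1]}/R_n$ is exactly the missing ingredient, and the sign-counting argument behind it is sound: a competitor violating the bound would produce a degree-$\le n$ polynomial with $n$ sign changes in $(-1,1)$ plus a zero at the exterior point $t_0$, hence identically zero. Note only that this relies on $L_n/\sqrt{w}$ equioscillating at $n+1$ points of $[-1,1]$, which is not stated in \cref{thm:bernstein} but follows from the explicit form \cref{eq:bernstein L} upon writing $v_\pm(\cos\theta)=e^{\pm\ii\theta}$, so that $L_n/\sqrt{w}$ is $R_n$ times a cosine of a monotone phase. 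In short: same route as the paper, but your version closes a real gap in the paper's own write-up of this step.
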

\begin{proof}
    We map $x\in S(a) := [-1,-a]\cup[a,1]$ to $y\in[-1,1]$ to $x\in[a,1]$ with the transformations
    \begin{equation}
        y=\frac{2x^2-(1+a^2)}{1-a^2},\quad x=\sqrt{\frac{1+a^2}{2}+\frac{1-a^2}{2}y}.
    \end{equation}
    The transformation results in
    \begin{align}
        \frac{1}{x} - P_{2n-1}(x) = \frac{1 - xP_{2n-1}(x)}{\sqrt{\frac{1+a^2}{2}+\frac{1-a^2}{2}y}} = \frac{\sqrt{2}}{\sqrt{1+a^2}}\frac{q_n(y)}{\sqrt{w(y)}},
    \end{align}
    where $w(y)=1-y/\alpha_1$ from \cref{eq:bernstein w}, and 
    \begin{equation}\label{eq: q and P}
        q_n(y) := 1-xP_{2n-1}(x)
    \end{equation} is a degree $n$ polynomial in $y$, as $xP_{2n-1}(x)$ is an even degree $2n$ polynomial in $x$. Let $\bar q$ be the leading coefficient of $q_n(y)$.
    According to \cref{thm:bernstein}, 
    \begin{equation}
        \min_{\{q_n(y)\in\mathbb{R}[y]\,|\, \text{deg}(q_n(y)) = 2n\}} \norm{\frac{q_n(y)/{\bar q} }{\sqrt{w(y)}}}_{\infty,[-1,1]} = R_n
    \end{equation}
    achieves its minimum for $q_n(y)/\bar q = L_n(y)$. The coefficient $\bar q$ can be determined as $\bar q = 1/L_n(-\tfrac{1+a^2}{1-a^2})$ because $q_n(y)=1$ from \cref{eq: q and P} for $y=-\tfrac{1+a^2}{1-a^2}, x=0$. The minimum polynomial \cref{eq: app poly result} can then be computed from \cref{eq: q and P}.

    As for the error $\varepsilon$, we can calculate
    \begin{align}
        \varepsilon = \frac{\sqrt{2}}{\sqrt{1+a^2}}R_n|g| = \frac{\sqrt{2}}{\sqrt{1+a^2}}\frac{\sqrt{1+\beta_1^2}}{2^{n-1}}\left|\frac{1}{L_n(-\tfrac{1+a^2}{1-a^2})}\right|
    \end{align}
    and use \cref{prop:ascheby}.    
\end{proof}

To complete the proof of \cref{thm:maintext}, we must characterize the polynomials $L_n(x)$ defined in \cref{eq:bernstein L} and show that they can be written as in \cref{eq: optimal polynomial L}, and the term in the denominator as in \cref{eq:L denominator}.

\begin{proposition}\label{prop:recurrence}
The polynomials $L_m(x)$ defined in \cref{eq:bernstein L} satisfy the recurrence relation
\begin{equation}
L_{n+1}(x)=x L_n(x)-\frac14 L_{n-1}(x).	
\end{equation}
\end{proposition}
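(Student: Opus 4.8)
The plan is to prove the three-term recurrence for $L_n(x)$ by establishing an analogous recurrence for the two basic building blocks $v_+(x)^{n-N}\Omega_+/\Omega_-$ and $v_-(x)^{n-N}\Omega_-/\Omega_+$ appearing in \cref{eq:bernstein L}, since these are the only factors in $L_n$ that depend on $n$. Observe that in the specialized case $N=2$ the prefactor $R_n = \frac{1}{2^{n-1}}\sqrt{(1+\beta_1^2)(1+\beta_2^2)}$ carries an explicit $n$-dependence through $1/2^{n-1}$, while the ratios $\Omega_\pm/\Omega_\mp$ and $\sqrt{w(x)}$ are independent of $n$. So the entire $n$-dependence of $L_n(x)$ lives in the product $R_n \cdot v_\pm(x)^{n-N}$. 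First I would factor out everything $n$-independent and define auxiliary quantities $A(x) := \frac{\Omega_+(x)}{\Omega_-(x)}\sqrt{w(x)}$ and $B(x) := \frac{\Omega_-(x)}{\Omega_+(x)}\sqrt{w(x)}$, so that
\begin{equation}
L_n(x)=\frac{R_n}{2}\Big(v_+(x)^{n-N}A(x)+v_-(x)^{n-N}B(x)\Big).
\end{equation}

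The key step is the elementary identity satisfied by $v_\pm(x)=x\pm\sqrt{x^2-1}$: they are the two roots of $t^2-2xt+1=0$, so each obeys $v_\pm^2 = 2x\,v_\pm - 1$, equivalently $v_\pm^{k+1} = 2x\,v_\pm^{k} - v_\pm^{k-1}$ for every integer $k$. Multiplying this by the appropriate $n$-independent factors and summing the $+$ and $-$ contributions, I would show that the combination $C_k := \tfrac12\big(v_+^{k}A + v_-^{k}B\big)$ satisfies $C_{k+1}=2x\,C_k - C_{k-1}$. Since $L_n(x) = R_n\, C_{n-N}$ and $R_{n+1}/R_n = 1/2$, I then convert the recurrence on $C_k$ into one on $L_n$: writing $k=n-N$, the Chebyshev-like relation $C_{n+1-N}=2x\,C_{n-N}-C_{n-1-N}$ becomes, after multiplying through by $R_{n+1}=R_n/2$,
\begin{equation}
L_{n+1}(x)=R_{n+1}C_{n+1-N}=x\,R_n C_{n-N}-R_{n+1}C_{n-1-N}=x\,L_n(x)-\tfrac14 L_{n-1}(x),
\end{equation}
where the factor $\tfrac14$ arises because $R_{n+1}C_{n-1-N}=\tfrac14 R_{n-1}C_{(n-1)-N}=\tfrac14 L_{n-1}(x)$, using $R_{n+1}=\tfrac14 R_{n-1}$.

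The main obstacle I anticipate is purely bookkeeping: carefully tracking the shift $k=n-N$ together with the geometric factor $R_n\propto 2^{-(n-1)}$, so that the $2x$ coefficient in the $C_k$-recurrence collapses to the bare $x$ in the $L_n$-recurrence (one factor of $2$ being absorbed by $R_{n+1}/R_n=1/2$) and the $-C_{n-1-N}$ term picks up exactly the factor $1/4$ from $R_{n+1}/R_{n-1}=1/4$. A subtlety worth checking is that the claimed recurrence is stated for the specialized $N=2$ parameters, so $R_n=2^{-(n-1)}\sqrt{1+\beta_1^2}$ (as $\beta_2=0$); I would verify the index ranges make sense for small $n$ and, if the recurrence is meant to hold for all $n\geq 2$ (or even as an identity of the closed-form expressions for all real $n$ via the same algebraic manipulation), note that the derivation via $v_\pm^{k+1}=2x v_\pm^k - v_\pm^{k-1}$ is valid for arbitrary integer exponents and hence does not actually require $n\geq N$. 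I would finish by remarking that this recurrence is consistent with \cref{eq: optimal polynomial L}, since $T_n(x)$ and $T_{n-1}(x)$ both satisfy the standard Chebyshev recurrence $T_{m+1}=2xT_m-T_{m-1}$, and the rescaling $L_n=2^{-(n-1)}(T_n+\tfrac{1-a}{1+a}T_{n-1})$ introduces exactly the same $\tfrac12$ and $\tfrac14$ factors, providing an independent cross-check.
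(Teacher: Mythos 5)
Your proposal is correct and follows essentially the same route as the paper's proof: both are direct computations from \cref{eq:bernstein L} that exploit the quadratic relation satisfied by $v_\pm(x)$ (your $v_\pm^{k+1}=2xv_\pm^k-v_\pm^{k-1}$ is equivalent to the paper's use of $\sqrt{x^2-1}\,v_\pm(x)=\pm xv_\pm(x)\mp 1$) together with $R_{n+1}=\tfrac12 R_n=\tfrac14 R_{n-1}$. Your packaging of the $n$-dependence into the auxiliary sequence $C_k$ is a clean reorganization of the same argument.
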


\begin{proof}
From \cref{eq:bernstein L}, we compute
\begin{align}\label{eq:Pm+1}
L_{n+1}(x)= &\; \frac{R_{n+1}}{2} \bigg( v_+(x)^{n+1-N}\frac{\Omega_+(x)}{\Omega_-(x)}+v_-(x)^{n+1-N}\frac{\Omega_-(x)}{\Omega_+(x)}\Bigg)\sqrt{w(x)} \nonumber\\
=&\; \frac{R_{n+1}}{2} \Bigg( \big(x+\sqrt{x^2-1}\big)v_+(x)^{n-N}\frac{\Omega_+(x)}{\Omega_-(x)}+ \big(x-\sqrt{x^2-1}\big)v_-(x)^{n-N}\frac{\Omega_-(x)}{\Omega_+(x)}	\bigg)\sqrt{w(x)}	 \nonumber\\
=&\; \frac{R_{n+1}}{2}x\Bigg( v_+(x)^{n-N}\frac{\Omega_+(x)}{\Omega_-(x)}+ v_-(x)^{n-N}\frac{\Omega_-(x)}{\Omega_+(x)}	\bigg)\sqrt{w(x)} \nonumber\\
&\; + \frac{R_{n+1}}2 \sqrt{x^2-1}\Bigg( v_+(x)^{n-N}\frac{\Omega_+(x)}{\Omega_-(x)} - v_-(x)^{n-N}\frac{\Omega_-(x)}{\Omega_+(x)}	\bigg)\sqrt{w(x)}.
\end{align}
By using that $R_{n+1}=\frac12R_n=\frac14R_{n-1}$ as a consequence of \cref{eq:Lm} and the identity
\begin{equation}
\sqrt{x^2-1}\,v_{\pm}(x)=\pm xv_{\pm}(x)\mp 1,	
\end{equation}
in \cref{eq:Pm+1}, we obtain the result. 
\end{proof}
As a consequence of \cref{prop:recurrence}, given $L_{n_0}(x)$ for any $n_0\geq2$ and any two polynomials in the set $\{L_{n_0-2}(x),L_{n_0-1}(x),L_{n_0+1}(x),L_{n_0+2}(x)\}$ (where defined) suffices to determine $L_{n}(x)$ for all $n\geq 1$. In the next proposition, we compute $L_1(x)$ and $L_2(x)$ in our special case.
\begin{proposition}
\label{prop:L1L2}
In our special case, we have:
\begin{equation}
L_1(x)=x-\beta_1,\qquad L_2(x)=x^2 -\frac12\beta_1x -\frac12.
\end{equation}
\end{proposition}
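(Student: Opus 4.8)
The plan is to substitute the specialized parameters ($N=2$, $\beta_2=0$, and $\beta_1=\tfrac{a-1}{a+1}$) directly into \cref{eq:bernstein L} and to simplify, treating $n=1$ and $n=2$ in turn. First I would record the two elementary identities $v_+(x)v_-(x)=1$ and $v_+(x)+v_-(x)=2x$, which follow at once from the definition of $v_\pm$. The central idea is that although $\Omega_\pm(x)$ and $L_n(x)$ individually carry square roots, these can be removed by pairing $\Omega_+$ with $\Omega_-$. Indeed, combining the two identities above with \cref{eq:alphatobeta} gives $(v_+(x)-\beta_j)(v_-(x)-\beta_j)=1-2\beta_j x+\beta_j^2=(1+\beta_j^2)(1-x/\alpha_j)$, and taking the product over $j$ yields $\Omega_+(x)\Omega_-(x)=C\sqrt{w(x)}$ with $C\coloneqq\prod_{j=1}^{N}\sqrt{1+\beta_j^2}$. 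Comparing with \cref{eq:Lm} shows $C=2^{n-1}R_n$, so that $\tfrac{R_n}{2C}=\tfrac1{2^n}$.

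The second step is to use this pairing to rationalize \cref{eq:bernstein L}. Writing $\tfrac{\Omega_+}{\Omega_-}\sqrt{w}=\tfrac{\Omega_+^2}{\Omega_+\Omega_-}\sqrt{w}=\tfrac{\Omega_+^2}{C}$, and symmetrically for the second summand, eliminates every square root and produces
\begin{equation}
L_n(x)=\frac{1}{2^n}\Big(v_+(x)^{n-N}\,\Omega_+(x)^2+v_-(x)^{n-N}\,\Omega_-(x)^2\Big).
\end{equation}
Because $\beta_2=0$, the squared factors collapse to $\Omega_\pm(x)^2=v_\pm(x)\big(v_\pm(x)-\beta_1\big)=v_\pm(x)^2-\beta_1 v_\pm(x)$, which is a genuine polynomial in $v_\pm(x)$, so the whole bracket becomes a symmetric function of $v_+,v_-$ and hence a polynomial in $x$.

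Finally I would specialize the exponent. For $n=1$ one has $n-N=-1$, so $v_\pm(x)^{-1}\Omega_\pm(x)^2=v_\pm(x)-\beta_1$ and $L_1(x)=\tfrac12\big((v_++v_-)-2\beta_1\big)=x-\beta_1$. For $n=2$ one has $n-N=0$, so $L_2(x)=\tfrac14\big((v_+^2+v_-^2)-\beta_1(v_++v_-)\big)$; substituting $v_+^2+v_-^2=(v_++v_-)^2-2v_+v_-=4x^2-2$ and $v_++v_-=2x$ gives $L_2(x)=x^2-\tfrac12\beta_1 x-\tfrac12$, as claimed. I expect the only real obstacle to be the bookkeeping of the square roots: one must check that the manifestly irrational expression \cref{eq:bernstein L} is actually polynomial, and the identity $\Omega_+\Omega_-=C\sqrt{w}$ is precisely what cancels the stray $\sqrt{w}$ and exposes the polynomial $\Omega_\pm^2$. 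A minor secondary point is the negative exponent $v_+^{-1}$ arising in the $n=1$ case, which is harmless here only because $\Omega_+^2$ contains a compensating factor of $v_+$.
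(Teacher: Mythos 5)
Your proposal is correct and follows essentially the same route as the paper: both substitute the specialized parameters into \cref{eq:bernstein L} and eliminate the square roots via $v_+(x)v_-(x)=1$ together with $(v_\pm(x)-\beta_1)(v_\mp(x)-\beta_1)=1-2\beta_1x+\beta_1^2=(1+\beta_1^2)w(x)$. The only difference is organizational --- you package the cancellation into the general rational identity $L_n(x)=2^{-n}\big(v_+^{n-N}\Omega_+^2+v_-^{n-N}\Omega_-^2\big)$ before specializing to $n=1,2$, whereas the paper performs the same manipulations inline for each case.
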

\begin{proof}
    We begin from \cref{eq:bernstein L} and compute:
\begin{align}
L_1(x)=&\;\frac{R_1}{2}\bigg(v_+(x)	^{-1}\frac{\Omega_+(x)}{\Omega_-(x)}+v_-(x)^{-1}\frac{\Omega_-(x)}{\Omega_+(x)}\bigg)\sqrt{w(x)} \nonumber\\
=&\; \frac{R_1\sqrt{w(x)}}{2}\left(v_+(x)^{-1}\frac{{\sqrt{v_+(x)-\beta_1}\sqrt{v_+(x)}}}{\sqrt{v_-(x)-\beta_1}\sqrt{v_-(x)}}+v_-(x)^{-1}\frac{{\sqrt{v_-(x)-\beta_1}\sqrt{v_-(x)}}}{\sqrt{v_+(x)-\beta_1}\sqrt{v_+(x)}}\right)\nonumber\\
&= \frac{R_1\sqrt{w(x)}}{2}\left( \frac{\sqrt{v_+(x)-\beta_1}}{\sqrt{v_-(x)-\beta_1}} + \frac{\sqrt{v_-(x)-\beta_1}}{\sqrt{v_+(x)-\beta_1}}\right) \nonumber\\
&=\frac{\sqrt{1+\beta_1^2}\sqrt{1-x/\alpha_1}}{2} \frac{(v_+(x)-\beta_1) + (v_-(x) - \beta_1)}{\sqrt{(v_-(x)-\beta_1)(v_+(x)-\beta_1)}} \nonumber\\
&=\frac{\sqrt{1+\beta^2_1}\sqrt{1-2x\beta_1/(1+\beta_1^2)}}{2} \frac{2x -2\beta_1}{\sqrt{1-2x\beta_1 + \beta_1^2}} \nonumber\\
&= x -\beta_1.
\end{align}
Similarly, we compute:
\begin{align}
    L_2(x) &= \frac{R_2}{2}\left(\frac{\Omega_+(x)}{\Omega_-(x)} + \frac{\Omega_-(x)}{\Omega_+(x)}\right)\sqrt{w(x)} \nonumber\\
    &= \frac{\sqrt{1+\beta_1}^2\sqrt{1-2x\beta_1/(1+\beta_1^2)}}{4}\left(\frac{\sqrt{v_+(x)-\beta_1}\sqrt{v_+(x)}}{\sqrt{v_-(x)-\beta_1}\sqrt{v_-(x)}}-\frac{\sqrt{v_-(x)-\beta_1}\sqrt{v_-(x)}}{\sqrt{v_+(x)-\beta_1}\sqrt{v_+(x)}}\right)\nonumber\\
    &= \frac{\sqrt{1+\beta_1^2 - 2x\beta_1}}{4}\frac{(v_+(x) - \beta_1)v_+(x) + (v_-(x)-\beta_1)v_-(x)}{\sqrt{v_-(x)-\beta_1}\sqrt{v_+(x)-\beta_1}} \nonumber\\
    &= \frac{\sqrt{1+\beta_1^2 - 2x\beta_1}}{4} \frac{2x^2 + 2\sqrt{x^2-1}^2 - 2\beta_1x}{\sqrt{1+\beta_1^2-2x\beta_1}}\nonumber\\
    &=x^2 -\frac{1}{2}\beta_1x - \frac12
\end{align}
\end{proof}

\begin{proposition}
\label{prop:ascheby}
    The following identity holds,
    \begin{equation}
L_n(x;a)\coloneqq \frac{1}{2^{n-1}}\bigg( T_n(x)+\frac{1-a}{1+a}T_{n-1}(x)\bigg).
    \end{equation}
\end{proposition}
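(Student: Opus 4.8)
The plan is to prove the identity by induction on $n$, using the second-order recurrence from \cref{prop:recurrence} together with the standard three-term recurrence $T_{k+1}(x) = 2xT_k(x) - T_{k-1}(x)$ for the Chebyshev polynomials. Write $c := \frac{1-a}{1+a}$ and define the candidate $\widetilde{L}_n(x) := \frac{1}{2^{n-1}}(T_n(x) + c\,T_{n-1}(x))$; the goal is to show $L_n = \widetilde{L}_n$ for all $n \ge 1$. By \cref{prop:recurrence}, the sequence $L_n$ is uniquely determined by the linear recurrence $L_{n+1} = xL_n - \tfrac14 L_{n-1}$ together with the two initial polynomials $L_1, L_2$ computed in \cref{prop:L1L2}. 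Hence it suffices to verify that $\widetilde{L}_n$ satisfies the same recurrence and the same initial data.

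First I would check the initial conditions. For the specialization in use, $\alpha_1 = \frac{a^2+1}{a^2-1}$ gives $\beta_1 = \frac{a-1}{a+1} = -c$. Then $\widetilde{L}_1 = T_1 + c\,T_0 = x + c = x - \beta_1 = L_1$ and $\widetilde{L}_2 = \tfrac12(T_2 + c\,T_1) = x^2 + \tfrac{c}{2}x - \tfrac12 = x^2 - \tfrac{\beta_1}{2}x - \tfrac12 = L_2$, matching \cref{prop:L1L2} exactly.

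The main step is the recurrence check. Substituting the definition of $\widetilde{L}_n$ into $x\widetilde{L}_n - \tfrac14\widetilde{L}_{n-1}$ and pulling out the common factor $2^{-n}$, I would expand the products using $2xT_k = T_{k+1} + T_{k-1}$: the terms $2xT_n$ and $2xc\,T_{n-1}$ become $T_{n+1} + T_{n-1}$ and $c(T_n + T_{n-2})$. The contribution $-(T_{n-1} + c\,T_{n-2})$ coming from $-\tfrac14\widetilde{L}_{n-1}$ then cancels the $T_{n-1}$ and $c\,T_{n-2}$ terms precisely, leaving $T_{n+1} + c\,T_n$, which is exactly $2^n\widetilde{L}_{n+1}$. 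This establishes $\widetilde{L}_{n+1} = x\widetilde{L}_n - \tfrac14\widetilde{L}_{n-1}$ for all $n \ge 2$.

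Combining the two base cases with the verified recurrence, a direct induction yields $L_n = \widetilde{L}_n$ for all $n$, which is the claimed identity. I do not anticipate a genuine obstacle: the only points requiring care are the sign identification $c = -\beta_1$ forced by the chosen specialization, and the index bookkeeping for the Chebyshev polynomials — the appearance of $T_{n-2}$ means the recurrence verification is valid only from $n \ge 2$, which is precisely why two explicit initial values (rather than one) are needed to start the induction.
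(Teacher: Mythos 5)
Your proof is correct and follows essentially the same route as the paper, which also argues that the Chebyshev expression satisfies the recurrence of \cref{prop:recurrence} with the initial values of \cref{prop:L1L2}; you have simply written out the details (the identification $\beta_1=-\tfrac{1-a}{1+a}$ and the cancellation via $2xT_k=T_{k+1}+T_{k-1}$) that the paper leaves implicit.
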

\begin{proof}
    The equation in the proposition follows the same recurrence relation as in \cref{prop:recurrence}, and has the same initial values as in \cref{prop:L1L2}.
\end{proof}

\begin{proposition}\label{cor:Lna}
The following identity for the term in the denominator holds,
\begin{equation}
L_n(-\tfrac{1+a^2}{1-a^2};a)=	\frac{(-1)^n}{2^{n-2}}\frac{a(1+a)^{n-2}}{(1-a)^n}.
\end{equation}
\end{proposition}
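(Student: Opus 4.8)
The plan is to start from \cref{prop:ascheby}, which expresses $L_n(x;a)$ as a Chebyshev combination, and evaluate it at the point $x_0 := -\tfrac{1+a^2}{1-a^2}$. Since $a\in(0,1)$ this point lies outside $[-1,1]$, so I would invoke the closed form valid there, $T_m(x_0)=\tfrac12\big(u_+^{\,m}+u_-^{\,m}\big)$ with $u_\pm := x_0\pm\sqrt{x_0^2-1}$ and $u_+u_-=1$. First I would compute $x_0^2-1=\tfrac{4a^2}{(1-a^2)^2}$, so that $\sqrt{x_0^2-1}=\tfrac{2a}{1-a^2}$, and then verify that the two roots factor cleanly as
\begin{equation}
u_+=-\frac{1-a}{1+a},\qquad u_-=-\frac{1+a}{1-a}.
\end{equation}

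Next I would substitute these into the Chebyshev expansion of \cref{prop:ascheby} and group by powers of $u_\pm$, writing
\begin{equation}
L_n(x_0;a)=\frac{1}{2^n}\Big[u_+^{\,n-1}\Big(u_++\tfrac{1-a}{1+a}\Big)+u_-^{\,n-1}\Big(u_-+\tfrac{1-a}{1+a}\Big)\Big].
\end{equation}
The key observation, and the heart of the argument, is that the weight $\tfrac{1-a}{1+a}$ appearing in the definition of $L_n$ is chosen precisely so that $u_++\tfrac{1-a}{1+a}=0$, annihilating the entire $u_+$ branch. This cancellation is exactly the feature distinguishing $L_n$ from $T_n$ (cf.\ the ChebIter polynomial \cref{eq: chebyit poly}) and is what makes the denominator tractable in closed form.

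With only the $u_-$ branch surviving, I would then simplify $u_-+\tfrac{1-a}{1+a}=-\tfrac{4a}{1-a^2}$ together with $u_-^{\,n-1}=(-1)^{n-1}\tfrac{(1+a)^{n-1}}{(1-a)^{n-1}}$, absorbing the sign via $(-1)^{n-1}(-1)=(-1)^n$ and using $1-a^2=(1-a)(1+a)$ and $4/2^{n}=1/2^{n-2}$ to arrive at the stated expression. I do not expect any genuine obstacle beyond bookkeeping: the only subtle point is the branch choice for $\sqrt{x_0^2-1}$, but because $T_m$ is a polynomial the result is independent of it, and the relation $u_+u_-=1$ provides a running consistency check. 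As a final sanity check I would confirm agreement with the equivalent form \cref{eq:L denominator} through $\alpha(a)=\tfrac{1+a}{2(1-a)}$.
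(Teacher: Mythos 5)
Your proposal is correct and follows the same route the paper takes: substituting the closed form $T_m(x_0)=\tfrac12(v_+^m+v_-^m)$ into the expression from \cref{prop:ascheby} and simplifying (the paper delegates this simplification to Mathematica, whereas you carry it out by hand via the clean observation that $u_++\tfrac{1-a}{1+a}=0$ kills the $u_+$ branch). All intermediate values check out, including $u_\pm=-\tfrac{1\mp a}{1\pm a}$ and the final agreement with \cref{eq:L denominator}.
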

\begin{proof}
    Either by induction based on the recurrence relation \cref{prop:recurrence} with starting values \cref{prop:L1L2}, or by inserting the Chebyshev polynomial identity
    \begin{equation}
        T_n(x) = \frac{1}{2}\left(v_+(x)^n + v_-(x)^n\right)
    \end{equation}
    into \cref{prop:ascheby}. We demonstrate the latter with Mathematica.

    Mathematica input:
    \vspace{-1.5ex}
    \begin{verbatim}
x = -(1 + a^2)/(1 - a^2);
vp = x + Sqrt[x^2 - 1];
vm = x - Sqrt[x^2 - 1];
t[n_] = (vp^n + vm^n)/2;
result = 1/2^(n - 1)*(t[n] + (1 - a)/(1 + a)*t[n - 1]);
Simplify[result, Assumptions -> 0 < a < 1]
    \end{verbatim}
    \vspace{-3ex}
    
    Mathematica output:
    \vspace{-1.5ex}
    \begin{verbatim}
2^(2 - n) (1/(-1 + a))^n a (1 + a)^(-2 + n)
    \end{verbatim}
\end{proof}

\section{Python code to generate the optimal polynomial}

\label{app: code}

Python file attached to pdf: \attachfile{compute_opt.py}

%\textattachfile[color=0 0 0]{compute_opt.py}{For Python file embedded in pdf, click here.}

\inputminted{python}{compute_opt.py}

\end{document}